\newtheorem{theorem}{Theorem}[section]
\newtheorem{corollary}{Corollary}[section]
\declaretheoremstyle[headfont=\normalfont]{normalhead}
\newcommand{\filt}{\ensuremath{\mathcal{F}}}
\begin{document}

% \author[]{Rajinda Wickrama\thanks{Department of Mathematics, University of Iowa, Iowa City, Iowa}}

\makeatletter
\patchcmd{\maketitle}{\@fnsymbol}{\@alph}{}{}  % Footnote numbers from symbols to small letters
\makeatother

\title{Pricing Exchange Rate Options and Quanto Caps in the Cross-Currency Random Field LIBOR Market Model}
\author{
   Rajinda Wickrama\thanks{Department of Mathematics, University of Iowa, Iowa City, Iowa. 
   Email: rajinda-wickrama@uiowa.edu}

}
\date{}

\date{}

% \title{The Cross-Currency Random Field LIBOR Market Model}
\maketitle
\begin{abstract}
\hskip -.2in
\noindent

% To hedge investments made in different economies, investors make use of cross-currency derivatives to manage risks associated with fluctuations of exchange rates. 
We develop an arbitrage-free random field LIBOR market model to price cross-currency derivatives. The uncertainty of the forward LIBOR rates of our cross-currency model is driven by a two time parameter random field instead of a finite dimensional Brownian motion. To demonstrate the applications of this model, we develop an approximate closed-form pricing formula for Quanto caps and cross-currency swaps. Further, we derive an exact pricing formula for an exchange rate option in the random field setting. 
\end{abstract}
\section{Introduction}
The London Inter-bank offered Rates (LIBOR) are reference interest rates at which banks lend loans to other banks in London \cite{lixinwu2019interest}. Many authors have worked and developed closed-form derivative prices based on LIBOR rates by modeling uncertainties with a finite dimensional Brownian motion (for example see \cite{brace1997market,jamshidian1997libor,jamshidian1999libor,musiela1997continuous}). Despite having many advantages, there are very few articles based on pricing interest-rate derivatives using random field interest models and the RFLMM. The main contribution of this paper is the construction of the cross-currency RFLMM. The cross-currency LIBOR market models were first developed by Mikkelsen \cite{mikkelsen2001cross} and Schl\"ogl\cite{schlogl} while the RFLLM was developed by Wu and Xu \cite{wu}. The uncertainty in the model developed by Schl\"ogl is driven by a standard Brownian motion. Scholg's model was extended to the multi-factor setting by Amin \cite{amin2003multi} and Brenner et al. \cite{benner2009multi}. Beveridge et al.\cite{beveridge2010efficient} develop a discussion on displaced diffusion LIBOR market model in the cross-currency setting to price Bermudan options. The main result by Schl\"ogl \cite{schlogl} is that it is not possible to assume all forward LIBOR rates of all economies and the forward exchange rates are lognormal simultaneously. We will show that this also holds true in the random field setting.\\

The structure of the article is as follows. In section 2, we briefly discuss the random field term structure model developed by Goldstein\cite{goldstein2000term}, and the random field LIBOR market model (RFLMM) developed by Wu and Xu \cite{wu}. Our main goal is to apply these two frameworks to extend the cross-currency LMM \cite{schlogl} to the random field setting. We construct the cross-currency RFLMM in section 3 and prove that even in the random field setting, it is not possible to assume that all the volatility terms for all maturities of the forward exchange rates are deterministic when all forward LIBOR rates are assumed to be lognormal. In section 4, we find an approximate valuation formula for the price of a Quanto cap in terms of the domestic currency. In section 5, we construct an approximate pricing formula a float-to-float cross currency swap and in section 5 we derive an exact formula for a call option written on the spot exchange rate. A similar formula in the finite dimensional Brownian motion case was derived by Musiela et.al. \cite{musiela}.
% All closed-form derivative prices we state and prove in this paper has only so far been derived by other authors when the volatility terms are driven by a finite dimensional Brownian motion. 

\section{Random field forward interest models}

We begin this section by stating the dynamics of the domestic and the foreign instantaneous forward rates with respect to the the domestic and the foreign risk-neutral measures $Q$ and $Q^F$. Working in the Goldstein framework \cite{goldstein2000term} we state the following: 

\begin{equation*}
    df(t,T)=\mu^Q(t,T)dt+\sigma(t,T)dZ^Q(t,T),
\end{equation*}

\begin{equation*}
    df_F(t,T)=\mu_F^{Q^F}(t,T)dt+\sigma_F(t,T)dZ^{Q^F}(t,T).
\end{equation*}
where $Z^{Q}$ and $Z^{Q^F}$ are random fields satisfying the following conditions (see \cite{santa2001dynamics}): 

\begin{enumerate}
    \item $Z^P(t,T)$ is continuous for all $t,T.$
    \item For any fixed $T$, the process $\{Z^P(t,T)\}_{t\leq T}$ is a $P-$martingale such that $E^P[dZ^P(t,T)]=0$ and $var^P[dZ(t,T)]=dt$ for all $t\leq T.$ 
    \item $dZ^P(t,T_1)dZ^P(t,T_2)=c(T_1,T_2)dt$ where $c$ is the correlation function of the random field which satisfies $C(T,T)=1.$
\end{enumerate}
We assume that the above assumptions are true for $P=Q,Q^F$ and that $c$ is a deterministic function. We also make the additional assumption that $dZ^Q(t,T_1)dZ^{Q^F}(t,T_2)=c(T_1,T_2)dt.$ The filtration generated by the random fields is defined by $\filt_t=\sigma\{Z(u,v)\mid\ u\leq t, u\leq v\}$.\\

    For a fixed $T$, the process $Z^P(t,T)$ is a continuous martingale with quadratic variation $d<Z^P(t,T),Z^P(t,T)>=dt.$ By Levy's characterization theorem (see theorem 3.16 in \cite{karatzas}), $Z^P(t,T)$ is a Brownian motion (with respect to $P$) for all $T\geq 0$. Goldstein \cite{goldstein2000term} proves that the no-arbitrage drift $\mu^P(t,T)=\sigma(t,T)\int_t^T \sigma(t,v)c(T,v)dv.$ The above model is a generalization of the Heath-Jarrow-Morton (HJM) Model \cite{heath1992bond} to a random field. The HJM model, which is a framework to model instantaneous forward rates, was  constructed with the volatility term driven by a finite (but arbitrary) number of independent Brownian motions. This was first generalized to a Gaussian random field setting by Kennedy \cite{kennedy1994term} and then by Goldstein \cite{goldstein2000term} to non-Gaussian random fields. \\

By $B(t,T)$ and $B_F(t,T)$ we denote the prices of the domestic and foreign zero-coupon bonds. These prices are calculated by 

\begin{equation*}
    B(t,T)=\exp\left(-\int_t^T f(t,u)du\right),
\end{equation*}
\begin{equation*}
    B_F(t,T)=\exp\left(-\int_t^T f_F(t,u)du\right).
\end{equation*}

By a straightforward application of Ito's lemma (\cite{oks} theorem 4.1.2) we can derive the dynamics of the bond prices as

\begin{equation}\label{bonddynamics}
    \frac{dB(t,T)}{B(t,T)}=r(t)dt-\int_t^T \sigma(t,u)dZ^Q(t,u)du,
\end{equation}

\begin{equation}\label{fbonddynamics}
    \frac{dB_F(t,T)}{B_F(t,T)}=r_F(t)dt-\int_t^T \sigma_F(t,u)dZ^{Q^{F}}(t,u)du
\end{equation}

where $r(t)=f(t,t)$ and $r_F(t)=f_F(t,t)$. Goldstein \cite{goldstein2000term} identifies the $T-$ forward measure $Q_T$ (by taking $B(t,T)$ to the the numeraire asset) by the relationships

\begin{equation}\label{dzt}
    dZ^{Q_T}(t,u)=dZ^Q(t,u)+\left(\int_t^T\sigma(t,v)c(u,v)\right)dt,
\end{equation}

\begin{equation}\label{dzft}
    dZ^{Q^F_T}(t,u)=dZ^{Q^F}(t,u)+\left(\int_t^T\sigma_F(t,v)c(u,v)\right)dt
\end{equation}

where $Z^{Q_T}(t,T)$ and $Z^{Q^F_T}(t,T)$ are $Q_T$ and $Q_T^F$ random fields respectively. Therefore, the price of a derivative with expiration date $T$ at time $t$ in terms of the forward measure $Q_T$ is:

\begin{equation}\label{forward}
    V(t)=B(t,T)E^{Q_T}_t[V(T)].
\end{equation}
\eqref{forward} holds true because $V(t)/B(t,T)$ is a $Q_T$-martingale. By the definition of a martingale, $$V(t)/B(t,T)=E^{Q_T}[V(T)/B(T,T)\mid \filt_t]$$
for all $t\leq T.$ We now illustrate how it can be proved that the relative bond price $B(t,U)/B(t,T)$ is a $Q_T-$martingale.

% The Radon-Nikodym derivative of $Q_T$ with respect to $Q$ can be expressed as follows: 

% \begin{equation*}
%     \frac{dQ_T}{dQ}=\exp\left(-\int_0^T\int_t^T \sigma(s,u)dZ^Q(s,u)du-\frac{1}{2}\int_0^T\int_t^T\int_t^T \sigma(s,u)\sigma(s,v)c(u,v)dudvds\right). 
% \end{equation*}

\begin{theorem}\label{relativebond}
The dynamics of $B(t,U)/B(t,T)$ with respect to $Q_T$ is given by 

\begin{equation}\label{bondqt}
    d\left(\frac{B(t,U)}{B(t,T)}\right)=\frac{B(t,U)}{B(t,T)}\left[\int_U^T \sigma(t,u)dZ^{Q_T}(t,u)du\right].
\end{equation}

\end{theorem}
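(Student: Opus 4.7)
My plan is to compute the differential of the ratio $B(t,U)/B(t,T)$ by Ito's quotient formula applied to the bond dynamics \eqref{bonddynamics}, and then use the measure-change relation \eqref{dzt} to rewrite the result in terms of $dZ^{Q_T}$. The absence of drift in \eqref{bondqt} should then follow from the general fact that $B(t,U)/B(t,T)$ must be a $Q_T$-martingale (by the definition of $Q_T$ via the numeraire $B(\cdot,T)$), which serves as a sanity check on the cancellation of $dt$ terms.

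Concretely, first I would set $X(t) = B(t,U)$ and $Y(t) = B(t,T)$ and apply Ito's formula to $f(x,y) = x/y$, producing
\begin{equation*}
d\!\left(\tfrac{X}{Y}\right) = \tfrac{X}{Y}\left[\tfrac{dX}{X} - \tfrac{dY}{Y} + \left(\tfrac{dY}{Y}\right)^{2} - \tfrac{dX}{X}\cdot\tfrac{dY}{Y}\right].
\end{equation*}
Substituting \eqref{bonddynamics}, the $r(t)\,dt$ terms in $dX/X - dY/Y$ cancel, and the stochastic part collapses (by linearity of the $du$-integral) to $\int_{U}^{T}\sigma(t,u)\,dZ^{Q}(t,u)\,du$. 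This is the $dZ^Q$-version of the desired driver.

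Next I would compute the quadratic-variation terms. Using the random field correlation rule $dZ^Q(t,u_1)\,dZ^Q(t,u_2) = c(u_1,u_2)\,dt$ from condition (3), both $(dY/Y)^2$ and $(dX/X)(dY/Y)$ become double $du\,dv$ integrals of $\sigma(t,u)\sigma(t,v)c(u,v)$ over squares $[t,T]^2$ and $[t,U]\times[t,T]$ respectively. Their difference telescopes to $\int_{U}^{T}du\int_{t}^{T}dv\,\sigma(t,u)\sigma(t,v)c(u,v)\,dt$. Finally, I substitute $dZ^{Q}(t,u) = dZ^{Q_T}(t,u) - \bigl(\int_t^{T}\sigma(t,v)c(u,v)\,dv\bigr)dt$ from \eqref{dzt} into the surviving stochastic integral $\int_U^T \sigma(t,u)\,dZ^Q(t,u)\,du$; the induced $dt$ contribution is exactly the negative of the quadratic-variation contribution computed above, so the total drift vanishes and only $\int_U^T \sigma(t,u)\,dZ^{Q_T}(t,u)\,du$ remains, yielding \eqref{bondqt}.

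The main obstacle I anticipate is purely bookkeeping: the double integrals from the quadratic covariation must be reorganized via symmetry of $c$ and Fubini, and one has to be careful that the limits $\int_t^T - \int_t^U = \int_U^T$ are interpreted correctly regardless of whether $U<T$ or $U>T$ (the formula is consistent in either case, since the sign flips on both the martingale term and the drift term simultaneously). Beyond that, every step is a routine, if notationally dense, application of Ito's lemma for random fields together with the measure change \eqref{dzt}.
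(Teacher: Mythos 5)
Your proposal is correct and follows essentially the same route as the paper: compute the Ito differential of the ratio using the bond dynamics \eqref{bonddynamics}, observe that the $r(t)\,dt$ terms cancel, track the quadratic-variation drift, and then invoke the measure change \eqref{dzt} to absorb that drift and leave the pure $dZ^{Q_T}$ integral. The only cosmetic difference is that the paper first computes $d(1/B(t,T))$ and then applies Ito's product rule to $B(t,U)\cdot(1/B(t,T))$, whereas you use the quotient formula in one step; these are algebraically identical manipulations of the same Ito calculation.
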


\begin{proof}
Observe that
\begin{equation*}
    d\left(\frac{1}{B(t,T)}\right)=\frac{1}{B(t,T)}\left[-r(t)dt+\int_t^T\sigma(t,u)dZ^Q(t,u)du+\int_t^T\int_t^T\sigma(t,u)\sigma(t,v)c(u,v)dudvdt\right].
\end{equation*}

Applying Ito's product rule to $B(t,U)/B(t,T)$ yields the following:

\begin{equation*}
\begin{split}
     d\left(\frac{B(t,U)}{B(t,T)}\right)&=B(t,U)d\left(\frac{1}{B(t,T)}\right)+\frac{1}{B(t,T)}dB(t,U)+d\left<\frac{1}{B(t,T)},B(t,U)\right>\\
     &=\left(\frac{B(t,U)}{B(t,T)}\right)\Bigg[r(t)dt-\int_t^U\sigma(t,u)dZ^Q(t,u)du-r(t)dt+\int_t^T\sigma(t,u)dZ^Q(t,u)du \\
     &+\int_t^T\int_t^T\sigma(t,u)\sigma(t,v)c(u,v)dudvdt-\int_t^T\int_t^U\sigma(t,u)\sigma(t,v)c(u,v)dudvdt\Bigg].
\end{split}
\end{equation*}

Applying \eqref{dzt} to the above equation gives the desired result.
\end{proof}

\section{The Cross-Currency RFLMM}

Fix a tenor $0\leq T_1<T_2<...<T_N\leq T$. Schl\"ogl \cite{schlogl} defines a process $X(t)$ as the spot exchange rate at time $t$ in terms of the domestic currency per one unit of foreign currency. $X(t)$ is assumed to be strictly positive martingale with respect to the $T-$ forward measure $Q_T$.  The $T_i-$forward exchange rate is defined as

\begin{equation}
    X(t,T_i):=\frac{B_F(t,T_i) X(t)}{B(t,T_i)}
\end{equation}

where $B_F(t,T_i)$ here denotes the price of a bond quoted in some foreign currency. $X(t,T_i)$ can be interpreted as the spot exchange rate at a future date $T_i$ as seen from time $t$. For the purposes of this paper, we will not be directly specifying the dynamics of the spot exchange rate $X(t)$ and instead we will work with the dynamics of the forward exchange rate. The dynamics of spot exchange rate $X(t)$ in the finite dimensional Brownian motion case can be found in \cite{musiela}, \cite{brace2007engineering}.\\

We assume that the dynamics of $X(t,T_i)$ with respect to $Q_{T_i}$ is given by

\begin{equation}
    dX(t,T_i)=X(t,T_i)\int_t^{T_i}\sigma_{X_{i}}(t,u)dZ^{Q_{T_i}}(t,u)du. 
\end{equation}

for all $i=1,2,...,N$ and $\sigma_{X_i}$ is the volatility of $X(t,T_i)$. By  applying Ito's lemma to $\ln X(t,T_i)$ we can see that

\begin{equation}
\begin{split}
     X(t,T_i)=X(0,T_i)\exp\Bigg(&\int_0^t\int_s^{T_i}\sigma_{X_{i}}(s,u)dZ^{Q_{T_i}}(s,u)du\\&-\frac{1}{2}\int_0^t\int_s^{T_i}\int_s^{T_i}\sigma_{X_{i}}(s,u)\sigma_{X_{i}}(s,v)c(u,v)dudvds\Bigg)   
\end{split}
\end{equation}
for all $t\leq T_i.$ Notice that 

\begin{equation*}
    \eta_i(t):= \exp\Bigg(\int_0^t\int_s^{T_i}\sigma_{X_{i}}(s,u)dZ^{Q_{T_i}}(s,u)du-\frac{1}{2}\int_0^t\int_s^{T_i}\int_s^{T_i}\sigma_{X_{i}}(s,u)\sigma_{X_{i}}(s,v)c(u,v)dudvds\Bigg) 
\end{equation*}

is a $Q_{T_{i}}$-martingale and $E^{Q_{T_{i}}}[\eta_i(T_i)]=1.$

\subsection{The dynamics of the foreign forward LIBOR rate with respect to the domestic forward measure.}

Now we state some key results developed Wu and Xu \cite{wu}
and extend it to the cross-currency format. Fix a discrete tenor $0< T_0<T_1<...<T_N\leq T.$ The domestic forward LIBOR rate $L(t,T_i)$ is defined in terms of bond prices in the following way:

\begin{equation*}
    1+\delta_{i+1}L(t,T_i)=\frac{B(t,T_{i})}{B(t,T_{i+1})}
\end{equation*}

where $\delta_{i+1}=T_{i+1}-T_i.$
Similarly, the foreign forward LIBOR rate $L_F(t,T_i)$ is defined by 

\begin{equation*}
    1+\delta_{i+1}L_F(t,T_i)=\frac{B_F(t,T_{i})}{B_F(t,T_{i+1})}.
\end{equation*}
As a consequence of theorem \eqref{relativebond}, $L(t,T_i)$ and $L_F(t,T_i)$ are martingales with respect to $Q_{T_{i+1}}$ and $Q^F_{T_{i+1}}$ respectively. 
Using \eqref{dzt} and \eqref{dzft}, we can see that $Q_{T_{i+1}}$, $Q_{T_i}$  and $Q^F_{T_{i+1}}$, $Q^F_{T_i}$ are related by

\begin{equation}\label{dzq}
    dZ^{Q_{T_{i+1}}}(t,u)=dZ^{Q_{T_i}}(t,u)+\left(\int_{T_i}^{T_{i+1}}\sigma(t,v)c(u,v)dv\right)dt
\end{equation}

and

\begin{equation}\label{dzqf}
    dZ^{Q^F_{T_{i+1}}}(t,u)=dZ^{Q^F_{T_i}}(t,u)+\left(\int_{T_i}^{T_{i+1}}\sigma_F(t,v)c(u,v)dv\right)dt.
\end{equation}

Applying Ito's rule to the definitions of the domestic and foreign LIBOR rates together with the equations \eqref{dzq} and \eqref{dzqf}, it can be shown that

\begin{equation}
    dL(t,T_i)=\frac{B(t,T_{i})}{\delta_{i+1} B(t,T_{i+1})}\int_{T_i}^{T_{i+1}}\sigma(t,u)dZ^{Q_{T_{i+1}}}(t,u)du
\end{equation}

and 

\begin{equation}
    dL_F(t,T_i)=\frac{B_F(t,T_{i})}{\delta_{i+1} B_F(t,T_{i+1})}\int_{T_i}^{T_{i+1}}\sigma_F(t,u)dZ^{Q^F_{T_{i+1}}}(t,u)du.
\end{equation}

Since $L(t,T_i)$ and $L_F(t,T_{i})$ are $Q_{T_{i+1}}$ and $Q^F_{T_{i+1}}$ martingales respectively, by the martingale representation theorem (\cite{oks} theorem 4.3.4), there exist $\filt_t$ adapted functions $\xi_i$ and $\xi_i^F$ such that 

\begin{equation}
    dL(t,T_i)=\int_{T_i}^{T_{i+1}}\xi_i(t,u)dZ^{Q_{T_{i+1}}}(t,u)du
\end{equation}
and

\begin{equation}\label{flibor}
    dL_F(t,T_i)=\int_{T_i}^{T_{i+1}}\xi_i^F(t,u)dZ^{Q^F_{T_{i+1}}}(t,u)du.
\end{equation}

Observe that 

\begin{equation*}
    \xi_i(t,u)=\frac{B(t,T_{i})}{\delta_{i+1} B(t,T_{i+1})}\sigma(t,u),
\end{equation*}

\begin{equation*}
    \xi^F_i(t,u)=\frac{B_F(t,T_{i})}{\delta_{i+1} B_F(t,T_{i+1})}\sigma_F(t,u).
\end{equation*}

Now define the functions $\lambda_i,\lambda_i^F$ such that 

\begin{equation*}
    L(t,T_i)\lambda_i(t,u)=\xi_i(t,u),
\end{equation*}

\begin{equation*}
    L_F(t,T_i)\lambda^F_i(t,u)=\xi^F_i(t,u).
\end{equation*}

Therefore,

\begin{equation}\label{lambdai}
    \lambda_i(t,u)=\frac{1+\delta_{i+1}L(t,T_i)}{\delta_{i+1}L(t,T_i)}\sigma(t,u),
\end{equation}

\begin{equation}\label{lambdaiforeign}
    \lambda^F_i(t,u)=\frac{1+\delta_{i+1}L_F(t,T_i)}{\delta_{i+1}L_F(t,T_i)}\sigma_F(t,u).
\end{equation}

It is clear from \eqref{lambdai} and \eqref{lambdaiforeign} that the functions $\lambda_i$ and $\lambda_i^F$ are stochastic in general. If we assume $\lambda_i$ and $\lambda_i^F$ are deterministic, this leads to the lognormal RFLMM. However, according to equations \eqref{lambdai} and \eqref{lambdaiforeign}, it is clear that the volatility functions $\sigma$ and $\sigma_F$ are stochastic functions in the lognormal RFLMM. We can restate the dynamics of the lognormal LIBOR rates as follows: 

\begin{equation}\label{dlibor}
    dL(t,T_i)=L(t,T_i)\int_{T_i}^{T_{i+1}}\lambda_i(t,u)dZ^{Q_{T_{i+1}}}(t,u)du
\end{equation}
and

\begin{equation}\label{dflibor}
    dL_F(t,T_i)=L_F(t,T_i)\int_{T_i}^{T_{i+1}}\lambda_i^F(t,u)dZ^{Q^F_{T_{i+1}}}(t,u)du.
\end{equation}

In order to express \eqref{flibor} in terms of $Z^{Q_{T_{i+1}}}$, we need to first find a find a no-arbitrage relationship between the random fields $Z^{Q^F_{T_{i+1}}}$ and $Z^{Q_{T_{i+1}}}.$ 

\begin{theorem}

The random fields under the measures $Q_{T_i}$ and $Q^F_{T_i}$ are related by 

\begin{equation}\label{zf}
    dZ^{Q^F_{T_i}}(t,u)=dZ^{Q_{T_i}}(t,u)-\left(\int_t^{T_i}\sigma_{X_i}(t,v)c(u,v)dv\right)dt.
\end{equation}

\end{theorem}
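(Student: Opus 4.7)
The strategy is to identify $\eta_i(t)=X(t,T_i)/X(0,T_i)$ as the Radon--Nikodym density of $Q^F_{T_i}$ with respect to $Q_{T_i}$, and then invoke Girsanov's theorem to shift the driving random field. The starting observation is that $X(t)\,B_F(t,T_i)$ is the time-$t$ domestic-currency value of a foreign zero-coupon bond maturing at $T_i$. Hence, viewed from the domestic market, the foreign $T_i$-forward measure $Q^F_{T_i}$ is precisely the equivalent martingale measure associated with taking $X(t)\,B_F(t,T_i)$ as numeraire, whereas $Q_{T_i}$ uses $B(t,T_i)$. The change-of-numeraire formula then gives
\begin{equation*}
\left.\frac{dQ^F_{T_i}}{dQ_{T_i}}\right|_{\filt_t}=\frac{X(t)\,B_F(t,T_i)/B(t,T_i)}{X(0)\,B_F(0,T_i)/B(0,T_i)}=\frac{X(t,T_i)}{X(0,T_i)}=\eta_i(t),
\end{equation*}
and $\eta_i$ is a positive $Q_{T_i}$-martingale with $\eta_i(0)=1$, so it is a legitimate density process.

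From the SDE for $\eta_i$ already recorded in the excerpt, namely
\begin{equation*}
d\eta_i(t)=\eta_i(t)\int_t^{T_i}\sigma_{X_i}(t,v)\,dZ^{Q_{T_i}}(t,v)\,dv,
\end{equation*}
I compute the quadratic covariation against the one-parameter process $Z^{Q_{T_i}}(\cdot,u)$ using the kernel identity $dZ^{Q_{T_i}}(t,u)\,dZ^{Q_{T_i}}(t,v)=c(u,v)\,dt$ to obtain
\begin{equation*}
d\bigl\langle \eta_i,\,Z^{Q_{T_i}}(\cdot,u)\bigr\rangle_t=\eta_i(t)\left(\int_t^{T_i}\sigma_{X_i}(t,v)\,c(u,v)\,dv\right)dt.
\end{equation*}

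Now I apply Girsanov's theorem. For each fixed $u$, the process $Z^{Q_{T_i}}(\cdot,u)$ is a $Q_{T_i}$-Brownian motion by L\'evy's characterization (as noted in the excerpt), so the shift produced by $\eta_i$ turns
\begin{equation*}
Z^{Q^F_{T_i}}(t,u)=Z^{Q_{T_i}}(t,u)-\int_0^t\eta_i(s)^{-1}\,d\bigl\langle \eta_i,\,Z^{Q_{T_i}}(\cdot,u)\bigr\rangle_s=Z^{Q_{T_i}}(t,u)-\int_0^t\int_s^{T_i}\sigma_{X_i}(s,v)\,c(u,v)\,dv\,ds
\end{equation*}
into a $Q^F_{T_i}$-Brownian motion, whose differential form is precisely \eqref{zf}. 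The main obstacle is verifying the statement simultaneously in the continuous maturity parameter $u$: one must check that the family $\{Z^{Q^F_{T_i}}(\cdot,u)\}_u$ still has correlation function $c$ under $Q^F_{T_i}$, so that it forms a bona fide $Q^F_{T_i}$-random field of the type assumed in the setup. Because $c$ is deterministic and the Girsanov drift is absolutely continuous in $t$, the quadratic covariations of the shifted family are unchanged and this extension is routine; the only additional bookkeeping is a Novikov-type integrability check for $\eta_i$, which is already implicit in the excerpt's assertion that $\eta_i$ is a true $Q_{T_i}$-martingale with $E^{Q_{T_i}}[\eta_i(T_i)]=1$.
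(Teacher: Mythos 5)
Your proof is correct, but it takes a genuinely different route from the paper's. The paper computes the $Q_{T_i}$-dynamics of the reciprocal forward exchange rate $1/X(t,T_i)$ via It\^o's lemma, observes (as you also note implicitly) that $1/X(t,T_i)=B(t,T_i)X(t)^{-1}/B_F(t,T_i)$ is a $Q^F_{T_i}$-martingale driven by $dZ^{Q^F_{T_i}}$ with volatility $-\sigma_{X_i}$, and then \emph{compares the two representations of} $d\bigl(1/X(t,T_i)\bigr)$ to read off the drift shift between the two random fields. The change of measure and its Radon--Nikodym density never appear explicitly; the argument is a purely computational ``equate the drifts'' step. You instead identify the density process up front, $\eta_i(t)=X(t,T_i)/X(0,T_i)=\bigl(X(t)B_F(t,T_i)/B(t,T_i)\bigr)/\bigl(X(0)B_F(0,T_i)/B(0,T_i)\bigr)$, as the ratio of numeraires, compute the quadratic covariation $d\langle\eta_i,Z^{Q_{T_i}}(\cdot,u)\rangle_t$, and apply Girsanov's theorem directly. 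Your approach is conceptually cleaner and makes the measure change explicit, which is useful downstream; the paper's approach is slightly more elementary in that it does not need to invoke the cross-currency change-of-numeraire identity $dQ^F_{T_i}/dQ_{T_i}|_{\filt_t}=\eta_i(t)$ (a standard but nontrivial fact that you are implicitly assuming from general theory -- the paper instead only uses the weaker fact that $1/X(t,T_i)$ is a $Q^F_{T_i}$-martingale, which it asserts without proof). Both proofs rest on essentially the same foundational input, just packaged differently; your extra remark about preserving the correlation kernel $c$ under the shift, and the $u$-by-$u$ bookkeeping to conclude the random-field statement, is a point the paper elides entirely and is worth keeping.
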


\begin{proof}

Recall that $X(t,T_i)$ is a martingale with respect to $Q_{T_{i}}$. Therefore,

\begin{equation*}
    \frac{1}{X(t,T_i)}=\frac{B(t,T_i)\frac{1}{X(t)}}{B_F(t,T_i)}
\end{equation*}
is a martingale with respect to the foreign $T_i-$forward measure $Q_{T_{i+1}}^F.$

Applying Ito's lemma to $\frac{1}{X(t,T_i)}$ yields that

\begin{equation}\label{1/x}
\begin{split}
    d\left(\frac{1}{X(t,T_i)}\right)&=-\frac{1}{X(t,T_i)^2}dX(t,T_i)+\frac{1}{X(t,T_i)^3}d
    \langle X(t,T_i),X(t,T_i)\rangle\\
    &= \frac{1}{X(t,T_i)}\left[-\int_t^{T_i}\sigma_{X_i}(t,u)dZ^{Q_{T_i}}(t,u)du+\int_{t}^{T_i}\int_{t}^{T_i}\sigma_{X_i}(t,u)\sigma_{X_i}(t,v)c(u,v)dudvdt\right ].
\end{split}
\end{equation}

Observe that the volatility term of $\frac{1}{X(t,T_i)}$, $\sigma_{\frac{1}{X_i}}(t,u)=-\sigma_{X_i}(t,u).$ Since $\frac{1}{X(t,T_i)}$ is a martingale under $Q^F_{T_i},$

\begin{equation}\label{1/x2}
     d\left(\frac{1}{X(t,T_i)}\right)
    = -\frac{1}{X(t,T_i)}\int_t^{T_i}\sigma_{X_i}(t,u)dZ^{Q^F_{T_i}}(t,u)du. 
\end{equation}

Comparing \eqref{1/x} with \eqref{1/x2} proves the desired result.

\end{proof}

As \eqref{zf} provides a way to link the two economies, we can now state the dynamics of the foreign LIBOR rate $L_F$ in terms of the domestic forward measure. 

\begin{theorem}
Fix $i\in \{1,2,...,N-1\}.$ For every $t\in [0,T_i],$

%\begin{equation}
 %   \begin{split}
   %     L_F(T_i,T_i)=L_F(t,T_i)\exp&\Bigg(
%        \int_t^{T_i}\int_{T_i}^{T_{i+1}}\lambda_F(s,T_i)\sigma_F(s,u)dZ^{Q_{T_{i+1}}}(s,u)du\\
%        &-\frac{1}{2}\int_t^{T_i}\int_{T_i}^{T_{i+1}}\int_{T_i}^{T_{i+1}}\lambda_F(s,T_i)^2\simga_F(s,u)\sigma_F(s,v)c(u,v)dudvds\\
 %       &-\int_t^{T_i}\int_{T_i}^{T_{i+1}}\int_{s}^{T_{i+1}}\sigma_F(s,u)\sigma_X(s,v)c(u,v)dvduds\Bigg).
  %  \end{split}
%\end{equation}

\begin{equation}\label{lfdomesticmeasure}
     \begin{split}
        L_F(T_i,T_i)=L_F(t,T_i)\exp(-\alpha_i(t))&\exp\Bigg(
        \int_t^{T_i}\int_{T_i}^{T_{i+1}}\lambda_i^F(s,u)dZ^{Q_{T_{i+1}}}(s,u)du\\
        &-\frac{1}{2}\int_t^{T_i}\int_{T_i}^{T_{i+1}}\int_{T_i}^{T_{i+1}}\lambda_i^F(s,u)\lambda_i^F(s,v)c(u,v)dudvds\Bigg)
    \end{split}
\end{equation}
where

\begin{equation}\label{alpha}
    \alpha_i(t):=\int_t^{T_i}\int_{T_i}^{T_{i+1}}\int_{s}^{T_{i+1}}\lambda_i^F(s,u)\sigma_{X_{i+1}}(s,v)c(u,v)dvduds.
\end{equation}
\end{theorem}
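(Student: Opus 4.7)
The plan is to begin with the SDE \eqref{dflibor} for $L_F(t,T_i)$ under the foreign forward measure $Q^F_{T_{i+1}}$, convert it into an SDE under the domestic forward measure $Q_{T_{i+1}}$ by means of the change-of-random-field relation \eqref{zf} (with the index shifted from $i$ to $i+1$), and then integrate the resulting lognormal SDE by applying Ito's lemma to $\ln L_F(\cdot,T_i)$.

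Concretely, first I would substitute
\[
dZ^{Q^F_{T_{i+1}}}(t,u) = dZ^{Q_{T_{i+1}}}(t,u) - \Bigl(\int_t^{T_{i+1}} \sigma_{X_{i+1}}(t,v)\,c(u,v)\,dv\Bigr) dt
\]
into \eqref{dflibor}. This yields a martingale part $L_F(t,T_i)\int_{T_i}^{T_{i+1}} \lambda_i^F(t,u)\, dZ^{Q_{T_{i+1}}}(t,u)\, du$ together with a finite-variation drift
\[
-\,L_F(t,T_i)\,\Bigl[\int_{T_i}^{T_{i+1}}\!\!\int_t^{T_{i+1}} \lambda_i^F(t,u)\,\sigma_{X_{i+1}}(t,v)\,c(u,v)\,dv\,du\Bigr] dt.
\]
Next, applying Ito's lemma to $\ln L_F(t,T_i)$ introduces the usual $-\tfrac{1}{2}$ quadratic-variation correction; using the correlation rule $dZ^{Q_{T_{i+1}}}(t,u_1)\,dZ^{Q_{T_{i+1}}}(t,u_2)=c(u_1,u_2)\,dt$, this correction equals $-\tfrac{1}{2}\int_{T_i}^{T_{i+1}}\!\int_{T_i}^{T_{i+1}} \lambda_i^F(t,u)\lambda_i^F(t,v)c(u,v)\,du\,dv\,dt$. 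Integrating $d\ln L_F$ from $t$ to $T_i$, exponentiating, and renaming the outer time variable to $s$ turns the drift piece above into exactly $-\alpha_i(t)$ as defined in \eqref{alpha} (the inner bound $\int_t^{T_{i+1}}$ at instant $s$ becoming $\int_s^{T_{i+1}}$), while the martingale and Ito correction pieces combine to the stated stochastic exponential.

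The main subtlety I anticipate is justifying the use of \eqref{zf} for random-field indices $u\in[T_i,T_{i+1}]$ rather than $u\in[t,T_i]$ as in its derivation. I would address this by noting that \eqref{zf} (which comes from matching the two representations \eqref{1/x} and \eqref{1/x2} of $1/X(t,T_{i+1})$ via \eqref{dzt}--\eqref{dzft}) expresses the Girsanov drift between the foreign and domestic $T_{i+1}$-forward random fields for every second coordinate $u$, not only those inside a particular tenor interval. A secondary technical point is invoking a stochastic-Fubini argument so that the outer $du$-integral commutes with the stochastic differential; this is standard once $\lambda_i^F$ and $\sigma_{X_{i+1}}$ are sufficiently integrable, which holds in the lognormal RFLMM where these coefficients are assumed deterministic.
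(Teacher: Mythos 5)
Your proposal is correct and takes exactly the approach the paper's (terse) proof gestures at: substitute the Girsanov shift \eqref{zf} with index $i+1$ into the lognormal SDE \eqref{dflibor}, apply Ito's lemma to $\ln L_F(\cdot,T_i)$, integrate from $t$ to $T_i$, and identify the drift term with $\alpha_i(t)$. The technical remarks about the range of the second random-field coordinate in \eqref{zf} and stochastic Fubini are reasonable fillings-in of detail the paper leaves implicit.
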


\begin{proof}
An application of Ito's lemma to $\ln (L_F(t,T_i))$ together with \eqref{zf} yields \eqref{lfdomesticmeasure}.
\end{proof}

% Notice that $\alpha(t,T_i,T_{i+1})$ is a stochastic function function when $i\neq N-1$. We will show that we can only find an approximate formula for the price of a Quanto cap because the function  $\alpha(t,T_i,T_{i+1})$ is not deterministic. To find a suitable approximation for $\alpha(t,T_i,T_{i+1})$, we recursively find a relationship between $\sigma_{X_i}$ and $\sigma_{X_N}$ for all $i=0,1,2,...,N-1.$\\

Subtracting \eqref{dzq} and \eqref{dzqf} and applying \eqref{zf}  to the resulting equation yields

\begin{equation}
    \int_t^{T_i}\sigma_{X_i}(t,v)c(u,v)dv=\int_{T_i}^{T_{i+1}}(\sigma_F(t,v)-\sigma(t,v))c(u,v)dv+\int_t^{T_{i+1}}\sigma_{X_{i+1}}(t,v)c(u,v)dv
\end{equation}

for all $i=0,1,2,...,N-1$. Hence, recursively we can show that 

\begin{equation}\label{recursive}
    \int_t^{T_i}\sigma_{X_i}(t,v)c(u,v)dv=\sum_{j=i}^{N-1}\int_{T_j}^{T_{j+1}}(\sigma_F(t,v)-\sigma(t,v))c(u,v)dv+\int_t^{T_{N}}\sigma_{X_{N}}(t,v)c(u,v)dv.
\end{equation}

According to \eqref{recursive} together with \eqref{lambdai} and \eqref{lambdaiforeign}, it is not possible to assume that the domestic forward LIBOR rate, the foreign LIBOR rate, and the forward exchange rates are all lognormal simultaneously. This was first observed by Shl\"ogl \cite{schlogl} in the Brownian motion setting. To further elaborate, if we choose $\lambda_i,\lambda_i^F$ to be deterministic for all $i=1,2,\dots,N$, and choose the $\sigma_{X_N}$ also to be deterministic, the rest of the forward exchange rate volatilities are determined by \eqref{recursive}. Since $\sigma$ and $\sigma_F$ are stochastic (when $\lambda_i, \lambda_i^F$ are deterministic), the volatilities of all other maturities of the forward exchange rates are stochastic as they are determined by \eqref{recursive}. We consider the following two cases in this paper: 

\begin{enumerate}[(i)]
    \item The LIBOR rates in all economies are lognormal.
    \item The domestic LIBOR rate and the forward exchange rate are lognormal.
\end{enumerate}

\section{Pricing Quanto Caps in the Cross-Currency RFLMM}

A Quanto is a type of a derivative where the underlying asset is given in terms of one currency and the price of the derivative is quoted in terms of another currency. In this section, we find an approximate valuation formula for a cap written on a foreign LIBOR rate priced in terms of the domestic currency. We find an approximate pricing formula for a Quanto cap assuming lognormal LIBOR rates.\\

To simplify notation, we define the following functions: 

\begin{equation*}
    A(t,T_j)=\frac{\delta_{j+1}L(t,T_j)}{1+\delta_{j+1}L(t,T_j)},
\end{equation*}

\begin{equation*}
    A_F(t,T_j)=\frac{\delta_{j+1}L_F(t,T_j)}{1+\delta_{j+1}L_F(t,T_j)}.
\end{equation*}
By substituting the above functions together with \eqref{lambdai}, and \eqref{lambdaiforeign} into \eqref{recursive} we get

\begin{equation}\label{recursive A}
\begin{split}
    \int_t^{T_i}\sigma_{X_i}(t,v)c(u,v)dv&=\sum_{j=i}^{N-1}A_F(t,T_j)\int_{T_j}^{T_{j+1}}\lambda_j^F(t,u) c(u,v)dv\\&-\sum_{j=i}^{N-1}A(t,T_j)\int_{T_j}^{T_{j+1}}\lambda_j(t,u) c(u,v)dv+\int_t^{T_{N}}\sigma_{X_{N}}(t,v)c(u,v)dv.
\end{split}
\end{equation}

By substituting \eqref{recursive A} into \eqref{alpha} we get 

\begin{equation}
\begin{split}
        \alpha_i(t)&=\int_t^{T_i}\int_{T_i}^{T_{i+1}}\lambda ^F_i(s,u)\sum_{j=i+1}^{N-1}A_F(s,T_j)\int_{T_j}^{T_{j+1}}\lambda_j^F(s,v) c(u,v)dvduds\\
        &-\int_t^{T_i}\int_{T_i}^{T_{i+1}}\lambda^F _i(s,u)\sum_{j=i+1}^{N-1}A(s,T_j)\int_{T_j}^{T_{j+1}}\lambda_j(s,v) c(u,v)dvduds\\
        &+\int_t^{T_i}\int_{T_i}^{T_{i+1}}\int_s^{T_N}\lambda_i^F(s,u)\sigma_{X_N}(s,v)c(u,v)dvduds.
\end{split}
\end{equation}
Since $A$ and $A_F$ are not $\filt_f$ measurable, $\alpha_i(t)$ is not $\filt_t$ measurable. Therefore, we find an approximation $\Tilde{\alpha}_i(t)$ to $\alpha_i(t)$ that is $\filt_t$ measurable by freezing the $L(s,T_i)$ and $L_F(s,T_i)$ at time $t$.

\begin{equation}\label{alphaapprox}
\begin{split}
        \tilde{\alpha}_i(t)&=\sum_{j=i+1}^{N-1}A_F(t,T_j)\int_t^{T_i}\int_{T_i}^{T_{i+1}}\int_{T_j}^{T_{j+1}}\lambda ^F_i(s,u)\lambda_j^F(s,v) c(u,v)dvduds\\
        &-\sum_{j=i+1}^{N-1}A(t,T_j)\int_t^{T_i}\int_{T_i}^{T_{i+1}}\int_{T_j}^{T_{j+1}}\lambda^F _i(s,u) \lambda_j(s,v) c(u,v)dvduds\\
        &+\int_t^{T_i}\int_{T_i}^{T_{i+1}}\int_s^{T_N}\lambda_i^F(s,u)\sigma_{X_N}(s,v)c(u,v)dvduds.
\end{split}
\end{equation}

Freezing approximation methods have been used to approximate valuation formulas for swaps and swaptions in \cite{brigo2007interest} and \cite{wu}.

% While ${\alpha}(t,T_i,T_{i+1})$ and  $\tilde{\alpha}_i(t)$ are both stochastic processes, the striking difference between the two processes is that $\tilde{\alpha}_i(t)$ is $\filt_t$ measurable and ${\alpha}(t,T_i,T_{i+1})$ is not.

\begin{theorem}\label{cap}[The Price of a Quanto Caplet]\label{quanto caplet}

Consider a caplet with a reset date $T_i$, settling date $T_{i+1}$, and a strike rate $\kappa$ written on a foreign LIBOR rate $L_F$ with a payoff $\delta_{i+1}\bar{X} (L_F(T_i,T_i)-\kappa)^+$ at time $T_{i+1}$ where $\bar{X}$ is a predetermined fixed exchange rate. The price of the caplet at time $t\leq T_i$ is given by

\begin{equation*}
    QCapl(t,T_i,T_{i+1})\approx\delta_{i+1}\Bar{X}B(t,T_{i+1})[L_F(t,T_i)N(\tilde{d_1}^i(t)-N(\tilde{d_2}^i(t)]
\end{equation*}

where 

\begin{equation*}
\tilde{d_1}^i(t)=    \frac{\ln\left(\frac{L_F(t,T_i)}{k}\right)-\tilde{\alpha}_i(t)+\frac{1}{2}\tilde{\Omega}_i(t)}{\sqrt{\tilde{\Omega}_i(t)}},
\end{equation*}

\begin{equation*}
\tilde{d_2}^i(t)=\frac{\ln\left(\frac{L_F(t,T_i)}{k}\right)-\tilde{\alpha}_i(t)-\frac{1}{2}\tilde{\Omega}_i(t)}{\sqrt{\tilde{\Omega}_i(t)}},
\end{equation*}
and

\begin{equation*}
\tilde{\Omega}_i(t)=\int_t^{T_i}\int_{T_i}^{T_{i+1}}\int_{T_i}^{T_{i+1}}\lambda_i^F(s,u)\lambda_i^F(s,v)c(u,v)dudvds.
\end{equation*}
\end{theorem}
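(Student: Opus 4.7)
The plan is to evaluate the caplet by risk-neutral pricing under the $T_{i+1}$-forward measure $Q_{T_{i+1}}$, where $B(t,T_{i+1})$ is the numeraire. By \eqref{forward},
\begin{equation*}
QCapl(t,T_i,T_{i+1}) = \delta_{i+1}\bar{X}\,B(t,T_{i+1})\,E^{Q_{T_{i+1}}}_t\!\left[(L_F(T_i,T_i)-\kappa)^{+}\right],
\end{equation*}
so the whole task reduces to computing this conditional expectation. I would substitute the explicit expression \eqref{lfdomesticmeasure} for $L_F(T_i,T_i)$ into the payoff, and then replace the non-adapted drift term $\alpha_i(t)$ by its freezing approximation $\tilde\alpha_i(t)$ from \eqref{alphaapprox}; since $\tilde\alpha_i(t)$ is $\filt_t$-measurable, it factors out of the conditional expectation as a deterministic constant at time $t$.

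Next I would isolate the random part
\begin{equation*}
M_i := \int_t^{T_i}\!\!\int_{T_i}^{T_{i+1}} \lambda_i^F(s,u)\,dZ^{Q_{T_{i+1}}}(s,u)\,du,
\end{equation*}
and show that under the lognormal assumption (where $\lambda_i^F$ is deterministic) $M_i$ is a centered Gaussian random variable conditional on $\filt_t$. The variance follows from the random-field Itô isometry: because $dZ^{Q_{T_{i+1}}}(s,u)\,dZ^{Q_{T_{i+1}}}(s,v) = c(u,v)\,ds$, one obtains exactly
\begin{equation*}
\mathrm{Var}^{Q_{T_{i+1}}}_t(M_i) = \int_t^{T_i}\!\!\int_{T_i}^{T_{i+1}}\!\!\int_{T_i}^{T_{i+1}} \lambda_i^F(s,u)\lambda_i^F(s,v)\,c(u,v)\,du\,dv\,ds = \tilde\Omega_i(t).
\end{equation*}
Thus after the approximation
\begin{equation*}
L_F(T_i,T_i) \approx L_F(t,T_i)\,e^{-\tilde\alpha_i(t)}\,\exp\!\left(M_i - \tfrac{1}{2}\tilde\Omega_i(t)\right),
\end{equation*}
the random variable inside the positive-part is a lognormal with known parameters.

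The calculation then reduces to the standard Black–Scholes identity $E[(Se^{Y}-\kappa)^{+}] = S\,N(d_1) - \kappa\,N(d_2)$ when $Y\sim N(-\tfrac{1}{2}\Omega,\Omega)$, with $S = L_F(t,T_i)e^{-\tilde\alpha_i(t)}$, $\Omega = \tilde\Omega_i(t)$, and $\kappa$ the strike. Plugging into the definition of $d_1,d_2$ yields exactly the $\tilde d_1^i(t),\tilde d_2^i(t)$ displayed in the theorem (the $-\tilde\alpha_i(t)$ coming from $\ln S = \ln L_F(t,T_i)-\tilde\alpha_i(t)$), and multiplying by $\delta_{i+1}\bar X B(t,T_{i+1})$ gives the stated price.

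The main obstacle is the justification of the freezing step: $\alpha_i(t)$ depends on the future LIBOR rates $L(s,T_j),L_F(s,T_j)$ through $A,A_F$ and is not $\filt_t$-measurable, so replacing it with $\tilde\alpha_i(t)$ is an approximation rather than an identity. One has to argue (as in \cite{brigo2007interest,wu}) that the $A,A_F$ coefficients are slowly varying on $[t,T_i]$ relative to the stochastic drivers, which lets us pull them outside the time integrals and still preserve the Gaussian structure needed for the Black-type formula. The remaining checks — Gaussianity of $M_i$, computation of $\tilde\Omega_i(t)$ by the random-field isometry, and the algebra identifying the arguments of $N(\cdot)$ — are routine consequences of the framework already established in Sections~2 and~3.
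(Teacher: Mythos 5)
Your proposal follows the same route as the paper: price under $Q_{T_{i+1}}$, freeze $\alpha_i$ to $\tilde{\alpha}_i(t)$, and use the approximate lognormality of $L_F(T_i,T_i)$ conditional on $\filt_t$ to reduce to a Black-type formula. Where the paper explicitly constructs the measure $\hat{Q}_{T_{i+1}}$ via the density $\eta$ and computes $I_1=E_t^{Q_{T_{i+1}}}[L_F(T_i,T_i)\mathbb{I}_D]$ by Bayes' rule, you invoke the standard identity $E[(Se^{Y}-\kappa)^{+}]=SN(d_1)-\kappa N(d_2)$; these are the same argument packaged differently. One discrepancy is worth flagging: you correctly set $S=L_F(t,T_i)e^{-\tilde{\alpha}_i(t)}$, so carrying your own computation through yields
\begin{equation*}
QCapl(t,T_i,T_{i+1})\approx\delta_{i+1}\bar{X}B(t,T_{i+1})\left[L_F(t,T_i)e^{-\tilde{\alpha}_i(t)}N(\tilde{d}_1^i(t))-\kappa N(\tilde{d}_2^i(t))\right],
\end{equation*}
which is not literally the theorem's displayed expression (that omits both the $e^{-\tilde{\alpha}_i(t)}$ factor on the first term and the strike $\kappa$ on the second). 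The missing $e^{-\tilde{\alpha}_i(t)}$ traces to the paper's own $I_1$ calculation, where $L_F(T_i,T_i)$ is replaced by $L_F(t,T_i)\,\eta(T_i,T_i,T_{i+1})/\eta(t,T_i,T_{i+1})$ and the $e^{-\tilde{\alpha}_i(t)}$ coming from \eqref{lfdomesticmeasure} is dropped. Your version is the internally consistent one: the cross-currency swap theorem in Section~5 uses precisely $E^{Q_{T_{i+1}}}_t[L_F(T_i,T_i)]\approx L_F(t,T_i)e^{-\tilde{\alpha}_i(t)}$. So the derivation you outline is correct, but the final sentence claiming it "gives the stated price" should instead note that it gives the corrected price, with the extra $e^{-\tilde{\alpha}_i(t)}$ factor.
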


\begin{proof}
The price of this caplet at a time $t\in [0,T_i]$ with respect to the $Q_{T_{i+1}}$ forward measure (\cite{musiela} corollary 9.6.1) can be expressed as follows:

\begin{equation*}
    Qcapl(t,T_i,T_{i+1})=\delta_{i+1}\Bar{X}B(t,T_{i+1})E_t^{Q_{T_{i+1}}}[(L_F(T_i,T_i)-k)^+]=\delta_{i+1}\Bar{X}B(t,T_{i+1})[I_1-kI_2]
\end{equation*}
where 
\begin{equation*}
    I_1=E_t^{Q_{T_{i+1}}}[L_F(T_i,T_i)\mathbb{I}_D],
\end{equation*}

\begin{equation*}
    I_2=E_t^{Q_{T_{i+1}}}[\mathbb{I}_D]
\end{equation*}
and $D=\{L_F(T_i,T_i)>k\}$ and $E_t[\cdot]$ denotes the conditional expectation with respect to $\filt_t.$

By taking the natural log of both sides of \eqref{lfdomesticmeasure} and replacing $\alpha_i(t)$ with $\tilde{\alpha}_i(t)$ we get that 

\begin{equation}\label{logfl2}
\begin{split}
    \ln L_F(T_i,T_i)&\approx\ln L_F(t,T_i)-\tilde{\alpha}_i(t)+\int_t^{T_i}\int_{T_i}^{T_{i+1}}\lambda^F_i(s,u)dZ^{Q_{T_{i+1}}}(s,u)du\\
        &-\frac{1}{2}\int_t^{T_i}\int_{T_i}^{T_{i+1}}\int_{T_i}^{T_{i+1}}\lambda^F_i(s,u)\lambda^F_i(s,v)c(u,v)dudvds.
\end{split}
\end{equation}

Therefore, the $Q_{T_{i+1}}$ conditional variance and expectation of $\ln L_F(T_i,T_i)$ are as follows:

\begin{equation}\label{logv}
\begin{split}
    Var_t^{Q_{T_{i+1}}}(\ln L_F(T_i,T_i))&\approx\int_t^{T_i}\int_{T_i}^{T_{i+1}}\int_{T_i}^{T_{i+1}}\lambda^F_i(s,u)\lambda^F_i(s,v)c(u,v)dudvds\\
    &:=\tilde{\Omega}_i(t),
\end{split}    
\end{equation}

\begin{equation}\label{loge}
    E_t^{Q_{T_{i+1}}}(\ln L_F(T_i,T_i))\approx\ln L_F(t,T_i)-\tilde{\alpha}_i(t)-\frac{1}{2}\tilde{\Omega}_i(t).
\end{equation}

Since $\ln L_F(T_i,T_i)\mid \filt_t$ is approximately normally distributed,  

\begin{equation}
    I_2\approx N\left(\frac{\ln\left(\frac{L_F(t,T_i)}{k}\right)-\tilde{\alpha}_i(t)-\frac{1}{2}\tilde{\Omega}_i(t)}{\sqrt{\tilde{\Omega}_i(t)}}\right)
\end{equation}

where $N(\cdot)$ denotes the cumulative distribution function of a standard normal distribution. To evaluate $I_1$ we first define an equivalent measure $\hat{Q}_{T_{i+1}}\sim Q_{T_{i+1}}$ by

    \begin{equation*}
    \frac{d\hat{Q}_{T_{i+1}}}{dQ_{T_{i+1}}}=\eta(T_i,T_i,T_{i+1})
\end{equation*}

where 

\begin{equation*}
    \eta(t,T_i,T_{i+1}):=\exp\left(\int_0^t\int_{T_i}^{T_{i+1}} \lambda_i^F(s,u)dZ^{Q_T}(s,u)du
    -\frac{1}{2}\int_0^t\int_{T_i}^{T_{i+1}}\int_{T_i}^{T_{i+1}} \lambda_i^F(s,u)\lambda_i^F(s,v)c(u,v)dudvds\right)
\end{equation*}

is a positive $Q_{T_{i+1}}-$martingale. The process $Z^{\hat{Q}_{T_{i+1}}}$ defined by

\begin{equation}\label{dzhat}
    dZ^{\hat{Q}_{T_{i+1}}}(s,u)= dZ^{{Q}_{T_{i+1}}}(s,u)-\left(\int_{T_i}^{T_{i+1}}\lambda_i^F(s,v)c(u,v)dv\right)ds
\end{equation}
is a  $\hat{Q}_{T_{i+1}}$ random field based on Girsanov theorem (\cite{oks} theorem 8.6.4). Under  $\hat{Q}_{T_{i+1}}$, \eqref{logfl2} transforms to 

\begin{equation}\label{logfltrans}
\begin{split}
    \ln L_F(T_i,T_i)&\approx\ln L_F(t,T_i)-\tilde{\alpha}_i(t)+\int_t^{T_i}\int_{T_i}^{T_{i+1}}\lambda^F_i(s,u)(s,u)dZ^{Q_{T_{i+1}}}(s,u)du\\
        &+\frac{1}{2}\int_t^{T_i}\int_{T_i}^{T_{i+1}}\int_{T_i}^{T_{i+1}}\lambda^F_i(s,u)\lambda^F_i(s,v)c(u,v)dudvds.
\end{split}
\end{equation}
This implies that

\begin{equation}\label{logehat}
    E_t^{\hat{Q}_{T_{i+1}}}[\ln L_F(T_i,T_i)]\approx\ln L_F(t,T_i)-\tilde{\alpha}_i(t)+\frac{1}{2}\tilde{\Omega}_i(t).
\end{equation}

Notice that the conditional variance remains unchanged under the change of measure. Using Bayes' rule (\cite{oks} lemma 8.6.2) we evaluate $I_1$ as follows:

%\begin{lemma}[Abstract Bayes's Formula]\label{bayes}

%Let $P,Q$ be equivalent probability measures defined on a common measurable space $(\Omega,\filt).$ Suppose that the Radon-Nikodym derivative of $Q$ with respect to $P$ is equal to 

%\begin{equation*}
 %   \frac{dQ}{dP}=\eta, P-a.s.
%\end{equation*}

%Let $\mathcal{G}$ be a sub-sigma algebra of $\filt,$ and let $\psi$ be a function integrable with respect to $Q.$ Then the following formula holds

%\begin{equation}
%    E^Q(\psi\mid \mathcal{G})=\frac{E^P(\psi \eta \mid \mathcal{G})}{E^P(\eta \mid \mathcal{G})}.
%\end{equation}

%\end{lemma}

\begin{equation*}
\begin{split}
    I_1&=E^{Q_{T_{i+1}}}_t[L(T_i,T_i)\mathbb{I}_D]\\
    &= E^{Q_{T_{i+1}}}_t[L(t,T_i)\eta(T_i,T_i,T_{i+1})\eta(t,T_i,T_{i+1})^{-1}\mathbb{I}_D]\\
    &=L(t,T_i)\frac{E^{Q_{T_{i+1}}}_t[\eta(T_i,T_i,T_{i+1})\mathbb{I}_D]}{E^{Q_{T_{i+1}}}_t[\eta(T_i,T_i,T_{i+1})]}\\
    &=L(t,T_i)E^{\hat{Q}_{T_{i+1}}}_t[\mathbb{I}_D]\\
    &\approx L(t,T_i)N\left(\frac{\ln\frac{L_F(t,T_i)}{k}-\tilde{\alpha}_i(t)+\frac{1}{2}\tilde{\Omega}_i(t)}{\sqrt{\tilde{\Omega}_i(t)}}\right).
\end{split}
\end{equation*}
\end{proof}
The next result follows immediately from the preceding theorem.
\begin{corollary} (The price of a Quanto Cap)

The price of a Quanto cap with reset dates $T_{i}$ for $i=0,1,2,...N-1$ and settling dates $T_i$ for $i=1,2,...N$ with a strike rate $\kappa$ at time $t\leq T_0$ equals

\begin{equation*}
    QCap(t)=\sum_{i=0}^{N-1}QCapl(t,T_i,T_{i+1}).
\end{equation*}

\end{corollary}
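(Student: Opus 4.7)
The plan is to observe that a Quanto cap is, by definition, a portfolio of Quanto caplets, and then invoke linearity of the no-arbitrage pricing operator together with the closed-form caplet price derived in Theorem \ref{quanto caplet}.

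More specifically, first I would recall that the cap with reset dates $T_0, T_1, \dots, T_{N-1}$ and settling dates $T_1, T_2, \dots, T_N$ has aggregate cash-flow stream $\{\delta_{i+1}\bar{X}(L_F(T_i,T_i)-\kappa)^+\}_{i=0}^{N-1}$, with the $i$-th cash flow paid at time $T_{i+1}$. Each individual cash flow is exactly the payoff of the Quanto caplet studied in Theorem \ref{quanto caplet} with reset date $T_i$ and settling date $T_{i+1}$.

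Next I would price each cash flow separately. For each $i$, the price at time $t \leq T_0 \leq T_i$ of the $i$-th cash flow in terms of the domestic currency is
\begin{equation*}
    \delta_{i+1}\bar{X}\, B(t,T_{i+1})\, E_t^{Q_{T_{i+1}}}\!\left[(L_F(T_i,T_i)-\kappa)^+\right] = QCapl(t,T_i,T_{i+1}),
\end{equation*}
where the first equality follows from \eqref{forward} applied with numeraire $B(\cdot,T_{i+1})$, and the second is the definition of the Quanto caplet price together with the approximation established in Theorem \ref{quanto caplet}.

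Finally, I would invoke linearity of conditional expectation (equivalently, the absence of arbitrage, which forces the price of a portfolio to equal the sum of the prices of its components) to conclude
\begin{equation*}
    QCap(t) = \sum_{i=0}^{N-1} \delta_{i+1}\bar{X}\, B(t,T_{i+1})\, E_t^{Q_{T_{i+1}}}\!\left[(L_F(T_i,T_i)-\kappa)^+\right] = \sum_{i=0}^{N-1} QCapl(t,T_i,T_{i+1}).
\end{equation*}
There is essentially no obstacle here: each summand lives under its own $T_{i+1}$-forward measure, but this is handled automatically because each is priced independently via its own numeraire change, so no joint measure change is needed. The only subtlety worth flagging is that the equality is approximate (inherited from the freezing approximation $\tilde{\alpha}_i$ used in Theorem \ref{quanto caplet}), so I would write the result with $\approx$ or explicitly note that the approximation is inherited caplet-by-caplet.
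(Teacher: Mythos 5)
Your proof is correct and matches the paper's intent: the paper omits the argument entirely, stating only that the corollary ``follows immediately from the preceding theorem,'' and the decomposition of the cap into caplets priced individually under their respective $T_{i+1}$-forward measures, then summed by linearity of pricing, is exactly that implicit argument. Your remark that the equality should really be $\approx$ (inherited from the freezing approximation in Theorem~\ref{quanto caplet}) is a fair and correct observation about the paper's notation.
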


Let us now see how an approximate pricing formula for a Quanto cap can be derived when the $L_F$ is not lognormal. From \eqref{lambdaiforeign},

\begin{equation}\label{lamif}
    \lambda_i^F(t,u)=\frac{\sigma_F(t,u)}{A_F(t,T_i)}. 
\end{equation}
If we allow $\lambda_i^F$ to be stochastic, it is now possible for us to assume that $\sigma_F$ is deterministic. Applying \eqref{lamif} to \eqref{dflibor} yields

\begin{equation}\label{dfliborstoch}
dL(t,T_i)=\frac{L_F(t,T_i)}{A_F(t,T_i)}\int_{T_i}^{T_{i+1}}\sigma_F(t,u)dZ^{Q^F_{T_{i+1}}}(t,u)du
\end{equation}

Using \eqref{zf} to \eqref{dfliborstoch} we get

\begin{equation}
\begin{split}
        \ln L_F(T_i,T_i)=&\ln L_F(t,T_i)+\int_t^{T_{i+1}}\int_{T_i}^{T_{i+1}}\frac{\sigma_F(s,u)}{A_F(s,T_i)}dZ^{Q_{T_{i+1}}}(t,u)du\\&-\int_t^{T_{i+1}}\int_{T_i}^{T_{i+1}}\int_{T_i}^{T_{i+1}}\frac{\sigma_F(s,u)\sigma_{X_{i+1}}(s,v)c(u,v)}{A_F(s,T_i)^2}dvduds\\
        &-\int_t^{T_{i+1}}\int_{T_i}^{T_{i+1}}\int_{T_i}^{T_{i+1}}\frac{\sigma_F(s,u)\sigma_{F}(s,v)c(u,v)}{A_F(s,T_i)^2}dvduds. 
\end{split}
\end{equation}

By freezing $A_F$ at time $t$, we get the following approximation:

\begin{equation}\label{beta}
\begin{split}
        \ln L_F(T_i,T_i)\approx&\ln L_F(t,T_i)+\frac{1}{A_F(t,T_i)}\int_t^{T_{i+1}}\int_{T_i}^{T_{i+1}}\sigma_F(s,u)dZ^{Q_{T_{i+1}}}(t,u)du\\&-
\underbrace{\frac{1}{A_F(t,T_i)^2}\int_t^{T_{i+1}}\int_{T_i}^{T_{i+1}}\int_{T_i}^{T_{i+1}}\sigma_F(s,u)\sigma_{X_{i+1}}(s,v)c(u,v)dvduds}_{=:\beta_i(t)}
\\
        &-\frac{1}{2A_F(t,T_i)^2}\int_t^{T_{i+1}}\int_{T_i}^{T_{i+1}}\int_{T_i}^{T_{i+1}}\sigma_F(s,u)\sigma_{F}(s,v)c(u,v)dvduds. 
\end{split}
\end{equation}
Therefore, 

\begin{equation}\label{logbeta}
    \begin{split}
        L_F(T_i,T_i)\approx&L(t,T_i)\exp\left(-\beta_i(t)\right)\exp\Big(\frac{1}{A_F(t,T_i)}\int_t^{T_{i+1}}\int_{T_i}^{T_{i+1}}\sigma_F(s,u)dZ^{Q_{T_{i+1}}}(t,u)du\\
        &-\frac{1}{2A_F(t,T_i)^2}\int_t^{T_{i+1}}\int_{T_i}^{T_{i+1}}\int_{T_i}^{T_{i+1}}\sigma_F(s,u)\sigma_{F}(s,v)c(u,v)dvduds\Bigg).
    \end{split}
\end{equation}

Setting $D=\{L_F(T_i,T_i)>k\}$, observe that

\begin{equation}
    E_t^{Q_{T_{i+1}}}[L_F(T_i,T_i)\mathbb{I}_D]=L(t,T_i)e^{-\beta_i(t)} E_t^{\hat{Q}_{T_{i+1}}}[\mathbb{I}_D]
\end{equation}
where $\hat{Q}_{T_{i+1}}$ is defined by the relationship 

\begin{equation}
    dZ^{\hat{Q}_{T_{i+1}}}(s,u)=dZ^{{Q}_{T_{i+1}}}(s,u)-\frac{1}{A_F(t,T_i)}\int_{T_i}^{T_{i+1}}\sigma_F(s,v)c(u,v)dvds.
\end{equation}

Similar to the proof of Theorem \ref{quanto caplet}, it can be shown that 

\begin{equation*}
    E_t^{\hat{Q}_{T_{i+1}}}[\mathbb{I}_D]=N\left(\frac{\ln\left(\frac{L_F(T_i,T_i)}{k}\right) -\beta_i(t)+\frac{1}{2}\gamma_i(t)}{\sqrt{\gamma_i(t)}}\right)
\end{equation*}
and 

\begin{equation*}
     E_t^{{Q}_{T_{i+1}}}[\mathbb{I}_D]=N\left(\frac{\ln\left(\frac{L_F(T_i,T_i)}{k}\right) -\beta_i(t)-\frac{1}{2}\gamma_i(t)}{\sqrt{\gamma_i(t)}}\right)
\end{equation*}
where 

\begin{equation*}
    \gamma_i(t):=\frac{1}{A_F(t,T_i)^2}\int_t^{T_{i+1}}\int_{T_i}^{T_{i+1}}\int_{T_i}^{T_{i+1}}\sigma_F(s,u)\sigma_{F}(s,v)c(u,v)dvduds.
\end{equation*}

The following theorem states the price of a Quanto cap when $L_F$ is not lognormal. 
\begin{theorem} 

The price of a Quanto cap ( when $L_F$ is not lognormal) with reset dates $T_{i}$ for $i=0,1,2,...N-1$ and settling dates $T_i$ for $i=1,2,...N$ with a strike rate $\kappa$ at time $t\leq T_0$ equals

\begin{equation*}
    QCap(t)\approx\sum_{i=0}^{N-1}B(t,T_{i+1})(L_F(t,T_i)e^{-\beta_i(t)}N(d_1^i)-kN(d_2^i))
\end{equation*}

where 

\begin{equation*}
    d_1^i:=\frac{\ln\left(\frac{L_F(T_i,T_i)}{k}\right) -\beta_i(t)+\frac{1}{2}\gamma_i(t)}{\sqrt{\gamma_i(t)}}
\end{equation*}
 and 
 \begin{equation*}
    d_2^i:=\frac{\ln\left(\frac{L_F(T_i,T_i)}{k}\right) -\beta_i(t)-\frac{1}{2}\gamma_i(t)}{\sqrt{\gamma_i(t)}}.
\end{equation*}

\end{theorem}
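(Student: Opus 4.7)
The plan is to observe that the bulk of the computation has already been carried out in the equations immediately preceding the theorem statement, so the proof amounts to assembling the caplet price, summing over the tenor, and verifying the Girsanov change of measure goes through under the freezing approximation.

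First I would decompose the Quanto cap into a portfolio of Quanto caplets, one for each sub-period $[T_i, T_{i+1}]$ with $i=0,1,\dots,N-1$. By additivity of discounted expectations under the appropriate forward measure, it suffices to evaluate each $QCapl(t,T_i,T_{i+1})$ in the non-lognormal $L_F$ setting and sum. As in the proof of Theorem \ref{quanto caplet}, I write the caplet price as $B(t,T_{i+1})(I_1 - k I_2)$, where $I_1 = E_t^{Q_{T_{i+1}}}[L_F(T_i,T_i)\mathbb{I}_D]$ and $I_2 = E_t^{Q_{T_{i+1}}}[\mathbb{I}_D]$ with $D = \{L_F(T_i,T_i) > k\}$.

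Next, I would appeal directly to the approximation \eqref{logbeta}, which (after freezing $A_F$ at $t$) exhibits $\ln L_F(T_i,T_i)$ as an approximately Gaussian random variable under $Q_{T_{i+1}}$ with conditional mean $\ln L_F(t,T_i) - \beta_i(t) - \tfrac{1}{2}\gamma_i(t)$ and conditional variance $\gamma_i(t)$. Evaluating $I_2$ is then the usual lognormal exercise and yields $N(d_2^i)$. For $I_1$, I would introduce the equivalent measure $\hat{Q}_{T_{i+1}}$ whose Radon-Nikodym derivative is the exponential martingale generated by the kernel $\sigma_F(s,u)/A_F(t,T_i)$ on $[T_i, T_{i+1}]$. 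The drift shift \eqref{dzhat} (adapted to the frozen kernel) combined with Bayes' rule gives
\begin{equation*}
I_1 \;\approx\; L_F(t,T_i)\,e^{-\beta_i(t)}\,E_t^{\hat{Q}_{T_{i+1}}}[\mathbb{I}_D].
\end{equation*}
Under $\hat{Q}_{T_{i+1}}$ the conditional mean of $\ln L_F(T_i,T_i)$ is raised by $\gamma_i(t)$ while the variance is unchanged, so the Gaussian tail calculation yields $N(d_1^i)$. Collecting the two pieces and summing over $i$ gives the stated formula.

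The main obstacle is not analytical novelty but the consistency of the freezing step: because the true kernel $\sigma_F(s,u)/A_F(s,T_i)$ is stochastic through $A_F$, the exponential $\eta$ used for the change to $\hat{Q}_{T_{i+1}}$ is only a genuine $Q_{T_{i+1}}$-martingale after $A_F(s,T_i)$ is replaced by the $\filt_t$-measurable constant $A_F(t,T_i)$. I would therefore be careful to perform the freezing \emph{before} invoking Girsanov, so that the drift correction in \eqref{dzhat} reduces to a deterministic (conditional on $\filt_t$) Girsanov kernel and the Gaussian conclusion for $\ln L_F(T_i,T_i)$ is legitimate. This is the same structural shortcut exploited in Theorem \ref{quanto caplet}, so the argument is parallel once one has \eqref{beta}--\eqref{logbeta} in hand.
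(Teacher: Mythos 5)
Your proposal is correct and follows the same route the paper takes: the theorem is stated as an immediate consequence of the computations in equations \eqref{lamif}--\eqref{logbeta} and the ensuing identities for $E_t^{Q_{T_{i+1}}}[L_F(T_i,T_i)\mathbb{I}_D]$ and $E_t^{\hat{Q}_{T_{i+1}}}[\mathbb{I}_D]$, so the ``proof'' is exactly the assembly-and-sum argument you describe, run in parallel with Theorem \ref{quanto caplet}. Your remark about freezing $A_F(s,T_i)$ at $t$ \emph{before} invoking Girsanov is a worthwhile clarification the paper leaves implicit (though strictly speaking the issue is not that the unfrozen exponential fails to be a martingale, but that without freezing the log-LIBOR is no longer conditionally Gaussian and the normal-CDF closed form would not follow); one small caveat is that, like the paper's theorem statement itself, your caplet price omits the accrual factor $\delta_{i+1}$ and the contractual rate $\bar{X}$ that appear in the payoff of Theorem \ref{quanto caplet}, and the numerators of $d_1^i, d_2^i$ should read $L_F(t,T_i)$ rather than $L_F(T_i,T_i)$.
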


\section{Cross-Currency Swaps}

In this section we consider the pricing of cross-currency swaps. A cross-currency swap is a type of an interest rate swap that involves at least one foreign interest rate. A discussion of pricing cross-currency swaps can be found in Brownian motion setting can be found in \cite{musiela}. Consider a float-to-float swap paid in arrears with reset dates $T_0,T_1,\dots,T_{N-1}$ and payment dates $T_1,T_2,\dots,T_N$ and some principal amount $N$ (stated in the domestic currency). At a payment date $T_{i+1}$, the buyer of the swap pays the seller an amount $NL(T_i,T_i)\delta_{i+1}$ and receives $NL_F(T_i,T_i)\delta_{i+1}$ from the seller. Both payments are made in the domestic currency. Therefore, the buyer is not exposed to exchange-rate risk. For simplicity, if we set $N=1$. Then the net payment at time $T_{i+1 }$ is $(L_F(T_i,T_i)-L(T_i,T_i))\delta_{i+1}$. Therefore, the value at time $t\leq T_0$ is

\begin{equation}
    CCS(t)=\sum_{i=0}^{N-1}\delta_{i+1}B(t,T_{i+1})E^{Q_{T_{i+1}}}_t\left[(L_F(T_i,T_i)-L(T_i,T_i))\right]. 
\end{equation}

\begin{theorem}

The price of a float-to float cross-currency swap with reset dates $T_0,T_1,\dots,T_{N-1}$ and payment dates when 

\begin{enumerate}[(i)]
    \item $L$ and $L_F$ are lognormal is:
     \begin{equation}
        CCS(t)\approx\sum_{i=0}^{N-1}\delta_{i+1}B(t,T_{i+1})(L_F(t,T_i)e^{-\tilde{\alpha}_i(t)}-L(t,T_i)),    
    \end{equation}
    
    \item $L$ and $X$ are lognormal is:
    \begin{equation}
        CCS(t)\approx\sum_{i=0}^{N-1}\delta_{i+1}B(t,T_{i+1})(L_F(t,T_i)e^{-\tilde{\beta}_i(t)}-L(t,T_i)),
    \end{equation}
\end{enumerate}
where $\tilde{\alpha}_i(t)$ and $\beta_i(t)$ and defined in \eqref{alphaapprox} and \eqref{beta} respectively.

\end{theorem}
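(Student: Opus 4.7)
The plan is to evaluate the conditional expectation $E^{Q_{T_{i+1}}}_t[L_F(T_i,T_i)-L(T_i,T_i)]$ term by term, term by term, under each of the two regimes, and then substitute into the pricing sum. The domestic piece is immediate: from Theorem \ref{relativebond} the process $L(\cdot,T_i)$ is a $Q_{T_{i+1}}$-martingale (it is an affine function of the relative bond price $B(\cdot,T_i)/B(\cdot,T_{i+1})$), so $E^{Q_{T_{i+1}}}_t[L(T_i,T_i)] = L(t,T_i)$ exactly. All the work is therefore in computing $E^{Q_{T_{i+1}}}_t[L_F(T_i,T_i)]$ in each case, after which the two stated formulas follow by collecting terms inside the discounted sum $\sum_{i=0}^{N-1} \delta_{i+1}B(t,T_{i+1})(\cdot)$.

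For case (i), I would start from the representation \eqref{lfdomesticmeasure} with the $\filt_t$-measurable approximation $\tilde{\alpha}_i(t)$ replacing $\alpha_i(t)$, writing
\begin{equation*}
L_F(T_i,T_i) \approx L_F(t,T_i)\, e^{-\tilde{\alpha}_i(t)}\, M_i(T_i),
\end{equation*}
where $M_i$ is the stochastic exponential
\begin{equation*}
M_i(T_i) = \exp\!\Bigl(\int_t^{T_i}\!\!\int_{T_i}^{T_{i+1}}\lambda_i^F(s,u)\,dZ^{Q_{T_{i+1}}}(s,u)\,du - \tfrac{1}{2}\int_t^{T_i}\!\!\int_{T_i}^{T_{i+1}}\!\!\int_{T_i}^{T_{i+1}}\lambda_i^F(s,u)\lambda_i^F(s,v)\,c(u,v)\,du\,dv\,ds\Bigr).
\end{equation*}
Since $\lambda_i^F$ is deterministic in the lognormal regime, $M_i$ is a genuine $Q_{T_{i+1}}$-exponential martingale started at $1$ at time $t$ (exactly the same kind of object as the density $\eta(t,T_i,T_{i+1})$ already used in the proof of Theorem \ref{quanto caplet}), hence $E^{Q_{T_{i+1}}}_t[M_i(T_i)] = 1$. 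Pulling the $\filt_t$-measurable factor $L_F(t,T_i) e^{-\tilde{\alpha}_i(t)}$ out of the expectation gives $E^{Q_{T_{i+1}}}_t[L_F(T_i,T_i)] \approx L_F(t,T_i) e^{-\tilde{\alpha}_i(t)}$, and subtracting the domestic contribution yields the formula in (i).

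Case (ii) follows the same recipe but starts from the representation \eqref{logbeta} obtained by freezing $A_F(s,T_i)$ at $t$ and using that $\sigma_F$ is deterministic when $L$ and $X$ (rather than $L_F$) are lognormal. Exponentiating \eqref{logbeta} gives $L_F(T_i,T_i) \approx L_F(t,T_i)\,e^{-\beta_i(t)}\,\widehat{M}_i(T_i)$, where $\widehat{M}_i$ is again a Doléans exponential of a deterministic integrand against $Z^{Q_{T_{i+1}}}$, so $E^{Q_{T_{i+1}}}_t[\widehat{M}_i(T_i)]=1$ and $E^{Q_{T_{i+1}}}_t[L_F(T_i,T_i)] \approx L_F(t,T_i)\,e^{-\beta_i(t)}$. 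Summing over $i$ with the discount factors $\delta_{i+1}B(t,T_{i+1})$ produces the formula in (ii).

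The only real subtlety, and what I expect will be the main place where care is needed, is the status of the approximation sign. Two separate approximations are being chained: the freezing of $A_F$ (or of $L, L_F$ inside $\alpha_i$) that makes $\tilde{\alpha}_i(t)$ and $\beta_i(t)$ into $\filt_t$-measurable quantities, and the fact that once frozen the resulting exponential integrand is deterministic (hence a bona fide martingale). Making this rigorous amounts to invoking, as in Theorem \ref{quanto caplet}, the same freezing heuristic already used in \cite{brigo2007interest} and \cite{wu} for swaps and swaptions; no further integrability condition beyond the standard Novikov condition on $\lambda_i^F$ and $\sigma_F$ is needed. All remaining computations are routine substitutions.
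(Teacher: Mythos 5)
Your proposal is correct and follows essentially the same route as the paper, which merely states that the result ``directly follows from equations \eqref{logfl2}, \eqref{logbeta}, and the fact that $E^{Q_{T_{i+1}}}_t[L(T_i,T_i)]=L(t,T_i)$''; you have simply made explicit the exponentiation of \eqref{logfl2}/\eqref{logbeta}, the martingale property of the Dol\'eans factor giving conditional expectation $1$, and the pull-out of the $\mathcal{F}_t$-measurable prefactor $L_F(t,T_i)e^{-\tilde{\alpha}_i(t)}$ (resp.\ $L_F(t,T_i)e^{-\beta_i(t)}$). No gap; you even correct a small typo implicit in \eqref{logbeta}, which should read $L_F(t,T_i)$ rather than $L(t,T_i)$ in the leading factor.
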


\begin{proof}
The proof directly follows from equations \eqref{logfl2}, \eqref{logbeta}, and the fact that $E^{Q_{T_{i+1}}}_t[L(T_i,T_i)]=L(t,T_i).$
\end{proof}

\section{Exchange Rate Options}

In this section we focus on deriving an exact closed-form pricing formula for an option written on the spot exchange rate $X(t)$ assuming that the forward exchange rate is lognormal. Consider a call option written on the spot exchange rate $X$ with an expiration date $T$, strike rate $K$, and principal amount of 1 unit of domestic currency. The payoff at time $T$ can be written as 

\begin{equation*}
    C_T^X= (X(T)-K)^+= (X(T,T)-K)^+.
\end{equation*}

\begin{theorem}\label{exchangeoption}Consider a call option written on the spot exchange rate $X$ which expires at time $T$ and strike rate $k.$ Then the no-arbitrage price of this call option at time $t\leq T$ denoted by $C_t^X$ is given by 

\begin{equation}
    C_t^X=B(t,T)(X(t,T)N(d_1)-kN(d_2))
\end{equation}

where 

\begin{equation*}
    d_1:=\frac{\ln\left(\frac{X(t,T)}{k}\right)+\frac{1}{2}\gamma(t,T)}{\sqrt{\gamma(t,T)}},
\end{equation*}

\begin{equation*}
    d_2:= \frac{\ln\left(\frac{X(t,T)}{k}\right)-\frac{1}{2}\gamma(t,T)}{\sqrt{\gamma(t,T)}},
\end{equation*}
and 

\begin{equation*}
    \gamma(t,T):=\int_t^T\int_s^T\int_s^T \sigma_{X_T}(s,u)\sigma_{X_T}(s,v)c(u,v)dudvds.
\end{equation*}

\end{theorem}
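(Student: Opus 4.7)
The plan is to mirror the structure of the proof of Theorem \ref{quanto caplet}, exploiting the fact that under the assumption that the forward exchange rate is lognormal, $X(t,T)$ already plays the role that $L_F(t,T_i)$ played there, with the crucial simplification that $X(t,T)$ is a genuine $Q_T$-martingale, so there is no drift correction of the form $\tilde{\alpha}_i(t)$ to deal with. First, I would use the forward-measure pricing formula \eqref{forward} to write
\begin{equation*}
    C_t^X = B(t,T)\, E^{Q_T}_t\bigl[(X(T,T)-k)^+\bigr],
\end{equation*}
using the identity $X(T) = X(T,T)$ (since $B(T,T)=B_F(T,T)=1$).

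Next, I would integrate the SDE for $X(t,T)$ under $Q_T$ (the $i=T$ case of the dynamics stated right after the definition of the forward exchange rate) and apply It\^o's lemma to $\ln X(t,T)$, as done when deriving $\eta_i(t)$ in the excerpt. This gives the exact representation
\begin{equation*}
    \ln X(T,T) = \ln X(t,T) - \tfrac{1}{2}\gamma(t,T) + \int_t^T\!\!\int_s^T \sigma_{X_T}(s,u)\,dZ^{Q_T}(s,u)\,du,
\end{equation*}
so that conditional on $\mathcal{F}_t$, $\ln X(T,T)$ is Gaussian with mean $\ln X(t,T) - \tfrac12\gamma(t,T)$ and variance $\gamma(t,T)$. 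Splitting the expectation as
\begin{equation*}
    E^{Q_T}_t\bigl[(X(T,T)-k)^+\bigr] = I_1 - k I_2, \quad I_1 = E^{Q_T}_t[X(T,T)\mathbb{I}_D],\ I_2 = E^{Q_T}_t[\mathbb{I}_D],
\end{equation*}
with $D = \{X(T,T)>k\}$, the quantity $I_2$ follows immediately from the Gaussian computation and equals $N(d_2)$.

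The remaining piece, $I_1$, is handled by exactly the change-of-measure argument used for $I_1$ in the Quanto caplet proof. I would define
\begin{equation*}
    \frac{d\hat{Q}_T}{dQ_T} = \exp\!\left(\int_0^T\!\!\int_s^T \sigma_{X_T}(s,u)\,dZ^{Q_T}(s,u)\,du - \tfrac12\int_0^T\!\!\int_s^T\!\!\int_s^T \sigma_{X_T}(s,u)\sigma_{X_T}(s,v)c(u,v)\,du\,dv\,ds\right),
\end{equation*}
which is a positive $Q_T$-martingale with expectation $1$, and apply Girsanov to see that under $\hat{Q}_T$ the conditional mean of $\ln X(T,T)$ shifts to $\ln X(t,T) + \tfrac12\gamma(t,T)$ while the conditional variance remains $\gamma(t,T)$. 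Bayes' rule then yields $I_1 = X(t,T) E^{\hat{Q}_T}_t[\mathbb{I}_D] = X(t,T) N(d_1)$, and combining the two pieces gives the claimed formula.

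The only substantive obstacle is the Girsanov step: one must justify that the random-field exponential above is a true martingale (not just a local one) and that \eqref{dzhat}-style shift is valid in the two-parameter setting. This is handled exactly as in the proof of Theorem \ref{quanto caplet}, by appealing to the cited Girsanov theorem (\cite{oks}, theorem 8.6.4) together with the standing assumption that $\sigma_{X_T}$ is deterministic (so Novikov's condition is trivial). Everything else is a routine Gaussian computation, and the assumption $\sigma_{X_N}$ deterministic in the list of admissible cases (ii) at the end of Section 3 guarantees that the formula is in fact exact rather than an approximation.
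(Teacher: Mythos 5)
Your proposal matches the paper's proof essentially step for step: forward-measure pricing, the $I_1-kI_2$ split, the Gaussian computation for $I_2$, and the Girsanov/Bayes change of measure to $\hat{Q}_T$ for $I_1$. The only small imprecision is in the closing remark — what the exactness requires is that $\sigma_{X_T}$ (the volatility of the $T$-maturity forward exchange rate used throughout the theorem) be deterministic, not $\sigma_{X_N}$; that is indeed the standing assumption "the forward exchange rate is lognormal" stated at the start of Section~6.
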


\begin{proof}

Recall that 

\begin{equation*}
    X(t,T):=\frac{B_F(t,T) X(t)}{B(t,T)}
\end{equation*}

and the dynamics of $X(t,T)$ is given by 

\begin{equation}
    dX(t,T)=X(t,T)\int_t^{T}\sigma_{X_T}(t,u)dZ^{Q_{T}}(t,u)du. 
\end{equation}

By Ito's rule

\begin{equation*}
    d\ln X(t,T)= \int_t^T \sigma_{X_T}(t,u)dZ^{Q_T} (t,u) du-\frac{1}{2}\int_t^T \int_t^T \sigma_{X_T}(t,u)\sigma_{X_T}(t,v)c(u,v)dudvdt. 
\end{equation*}

Integrating the above equation yields

\begin{equation}\label{Xdyanamics}
\begin{split}
 X(T,T)=X(t,T)\exp\Bigg( &\int_t^T\int_s^T \sigma_{X_T}(s,u)dZ^{Q_T}(s,u)du\\
 &-\frac{1}{2}\int_t^T \int_s^T\int_s^T \sigma_{X_T}(s,u)\sigma_{X_T}(s,v)c(u,v)dudvds\Bigg)
\end{split}
\end{equation}

 The price with respect to the forward measure $Q_T$ is 

\begin{equation*}
\begin{split}
        C_t^X&=B(t,T)E_t^{Q_T}\left[(X(T,T)-k)^+\right]\\
        &= B(t,T)[I_1-kI_2]
\end{split}
\end{equation*}
where 

\begin{equation*}
    I_1:=E_t^{Q_T}[X(T,T)\mathbb{I}_D]
\end{equation*}

\begin{equation*}
    I_2:=E_t^{Q_T}[\mathbb{I}_D]
\end{equation*}

and $D=\{X(T,T)>k\}.$ Let us first evaluate $I_2$. For this, we have to find the mean and the variance of $\ln X(T,T)$ with respect to $Q_T$ conditioned on $\filt_t.$ By \eqref{Xdyanamics}, 

\begin{equation}
\begin{split}
    \ln X(T,T)&= \ln X(t,T) +\int_t^T\int_s^T \sigma_{X_T}(s,u)dZ^{Q_T}(s,u)du\\
    &-\frac{1}{2}\int_t^T\int_s^T\int_s^T \sigma_{X_T}(s,u)\sigma_{X_T}(s,v)c(u,v)dudvds.
\end{split}
\end{equation}

Therefore,

\begin{equation}\label{lnxvar}
\begin{split}
    Var_t^{Q_T}[\ln X(T,T) ]&=\int_t^T\int_s^T\int_s^T \sigma_{X_T}(s,u)\sigma_{X_T}(s,v)c(u,v)dudvds\\
    &:=\gamma(t,T), 
    \end{split}
\end{equation}

and 

\begin{equation}\label{lnxexp}
    \begin{split}
        E^{Q_T}_t[\ln X(T,T)]&=\ln X(t,T)-\frac{1}{2}\int_t^T\int_s^T\int_s^T \sigma_{X_T}(s,u)\sigma_{X_T}(s,v)c(u,v)dudvds\\
        &=\ln X(t,T)-\frac{1}{2}\gamma(t,T).
    \end{split}
\end{equation}

Since $\ln X(T,T)\mid \filt_t$ is normally distributed,

\begin{equation*}
\begin{split}
    I_2&=N\left(\frac{\ln\left(\frac{X(t,T)}{k}\right)-\frac{1}{2}\gamma(t,T)}{\sqrt{\gamma(t,T)}}\right).
\end{split}
\end{equation*}

To evaluate $I_1,$ we define a new random field ${Z}^{\hat{Q}_T}$ by

\begin{equation}\label{zhat}
    d\hat{Z}^{Q_T}(s,u)=dZ^{Q_T}(s,u)-\left(\int_s^T\sigma_X(s,v)c(u,v)dv\right)ds.
\end{equation}

Then by Girsanov theorem, there exists a measure $\hat{Q}_T\sim Q_T$ such that $\{Z(t,T)\}_t$ is a $\hat{Q}_T$ Brownian motion. In particular, the Radon-Nikodym derivative of $\hat{Q}_T$ with respect to $Q_T$ is given by 

\begin{equation*}
    \frac{d\hat{Q}_T}{dQ_T}=\eta(T,T)
\end{equation*}

where 

\begin{equation*}
    \eta(t,T)=\exp\left(\int_0^t\int_s^T \sigma_{X_T}(s,u)dZ^T(s,u)du
    -\frac{1}{2}\int_0^t\int_s^T\int_s^T \sigma_{X_T}(s,u)\sigma_{X_T}(s,v)c(u,v)dudvds\right)
\end{equation*}

is a strictly positive martingale with respect to $Q_T$. In particular, $\eta(t,T)=E^{Q_T}_t[\eta(T,T)]$ for all $t\leq T.$
Applying \eqref{zhat} to \eqref{Xdyanamics} we can express the dynamics of $X(T,T)$ under the measure $\hat{Q}_T$ as 
\begin{equation}
    \begin{split}
         X(T,T)=X(t,T)\exp\Bigg( &\int_t^T\int_s^T \sigma_X(s,u)d\hat{Z}^{Q_T}\\
 &+\frac{1}{2}\int_t^T \int_s^T \sigma_X(s,u)\sigma_X(s,v)c(u,v)dudvds\Bigg).
    \end{split}
\end{equation}

Therefore, using Bayes' rule we show that 

\begin{equation*}
\begin{split}
    I_1&=E^{Q_T}_t[X(T,T)\mathbb{I}_D]\\
    &= E^{Q_T}_t[X(t,T)\eta(T,T)\eta(t,T)^{-1}\mathbb{I}_D]\\
    &=X(t,T)\frac{E^{Q_T}_t[\eta(T,T)\mathbb{I}_D]}{E^{Q_T}_t[\eta(t,T)]}\\
    &=X(t,T)E^{\hat{Q}_T}_t[\mathbb{I}_D].
\end{split}
\end{equation*}

To evaluate $E^{\hat{Q}_T}_t[\mathbb{I}_D]$, observe that

\begin{equation}\label{lnxexp}
    \begin{split}
        E^{\hat{Q}_T}_t[\ln X(T,T)\mid \filt_t]&=\ln X(t,T)+\frac{1}{2}\int_t^T\int_s^T\int_s^T \sigma_X(s,u)\sigma_X(s,v)c(u,v)dudvds\\
        &=\ln X(t,T)+\frac{1}{2}\gamma(t,T).
    \end{split}
\end{equation}

Therefore,

\begin{equation*}
    I_1=X(t,T)N\left(\frac{\ln\left(\frac{X(t,T)}{k}\right)+\frac{1}{2}\gamma(t,T)}{\sqrt{\gamma(t,T)}}\right).
\end{equation*}

\end{proof}

\printbibliography

@book{oks,
  title={Stochastic differential equations: an introduction with applications},
  author={Oksendal, Bernt},
  year={2013},
  publisher={Springer Science \& Business Media}
}

@book{karatzas,
  title={Brownian Motion and Stochastic Calculus},
  author={Karatzas, I. and Shreve, S.E.},
  isbn={9781468403046},
  lccn={87012815},
  series={Graduate Texts in Mathematics},
  year={2012},
  publisher={Springer-Verlag}
}

@article{schlogl,
  title={A multicurrency extension of the lognormal interest rate market models},
  author={Schl{\"o}gl, Erik},
  journal={Finance and Stochastics},
  volume={6},
  number={2},
  pages={173--196},
  year={2002},
  publisher={Springer}
}

@book{musiela,
  title={Martingale methods in financial modelling, 2005},
  author={Musiela, Marek and Rutkowski, Marek},
  year={2005},
  publisher={Springer Finance}
}

@article{wu,
  title={A Random Field LIBOR Market Model},
  author={Wu, Tao L and Xu, Shengqiang},
  journal={Journal of Futures Markets},
  volume={34},
  number={6},
  pages={580--606},
  year={2014},
  publisher={Wiley Online Library}
}

@article{goldstein2000term,
  title={The term structure of interest rates as a random field},
  author={Goldstein, Robert S},
  journal={The Review of financial studies},
  volume={13},
  number={2},
  pages={365--384},
  year={2000},
  publisher={Oxford University Press}
}

@article{kennedy1994term,
  title={The term structure of interest rates as a Gaussian random field},
  author={Kennedy, Douglas P},
  journal={Mathematical Finance},
  volume={4},
  number={3},
  pages={247--258},
  year={1994},
  publisher={Wiley Online Library}
}

@article{heath1992bond,
  title={Bond pricing and the term structure of interest rates: A new methodology for contingent claims valuation},
  author={Heath, David and Jarrow, Robert and Morton, Andrew},
  journal={Econometrica: Journal of the Econometric Society},
  pages={77--105},
  year={1992},
  publisher={JSTOR}
}

@article{santa2001dynamics,
  title={The dynamics of the forward interest rate curve with stochastic string shocks},
  author={Santa-Clara, Pedro and Sornette, Didier},
  journal={The Review of Financial Studies},
  volume={14},
  number={1},
  pages={149--185},
  year={2001},
  publisher={Oxford University Press}
}

@book{lixinwu2019interest,
  title={Interest rate modeling: Theory and practice},
  author={Wu, Lixin},
  year={2019},
  publisher={CRC Press}
}

@article{jamshidian1997libor,
  title={LIBOR and swap market models and measures},
  author={Jamshidian, Farshid},
  journal={Finance and Stochastics},
  volume={1},
  number={4},
  pages={293--330},
  year={1997},
  publisher={Springer}
}

@article{brace1997market,
  title={The market model of interest rate dynamics},
  author={Brace, Alan and Gatarek, Dariusz and Musiela, Marek},
  journal={Mathematical finance},
  volume={7},
  number={2},
  pages={127--155},
  year={1997},
  publisher={Wiley Online Library}
}

@article{musiela1997continuous,
  title={Continuous-time term structure models: Forward measure approach},
  author={Musiela, Marek and Rutkowski, Marek},
  journal={Finance and Stochastics},
  volume={1},
  number={4},
  pages={261--291},
  year={1997},
  publisher={Springer}
}

@techreport{jamshidian1999libor,
  title={LIBOR market model with semimartingales},
  author={Jamshidian, Farshid},
  year={1999},
  institution={Working Paper, NetAnalytic Ltd}
}

@book{brace2007engineering,
  title={Engineering BGM},
  author={Brace, Alan},
  year={2007},
  publisher={CRC Press}
}

@book{brigo2007interest,
  title={Interest rate models-theory and practice: with smile, inflation and credit},
  author={Brigo, Damiano and Mercurio, Fabio},
  year={2007},
  publisher={Springer Science \& Business Media}
}

@article{beveridge2010efficient,
  title={Efficient pricing and Greeks in the cross-currency LIBOR market model},
  author={Beveridge, Christopher and Joshi, Mark S and Wright, Will M},
  journal={Available at SSRN 1662229},
  year={2010}
}

@article{benner2009multi,
  title={A multi-factor cross-currency LIBOR market model},
  author={Benner, Wolfgang and Zyapkov, Lyudmil and Jortzik, Stephan},
  journal={The Journal of Derivatives},
  volume={16},
  number={4},
  pages={53--71},
  year={2009},
  publisher={Institutional Investor Journals Umbrella}
}

@article{amin2003multi,
  title={Multi-factor cross currency Libor market models: implementation, calibration and examples},
  author={Amin, Ahsan},
  journal={Calibration and Examples (May 1, 2003)},
  year={2003}
}

@techreport{mikkelsen2001cross,
  title={Cross-currency LIBOR market models},
  author={Mikkelsen, Peter},
  year={2001},
  institution={University of Aarhus, Aarhus School of Business, Department of Business Studies}
}
\end{document}